\theoremstyle{plain}
\newtheorem{theorem}{Theorem}[section]
\newtheorem{proposition}[theorem]{Proposition}
\newtheorem{criterion}[theorem]{Criterion}
\theoremstyle{definition}
\newtheorem{remark}[theorem]{Remark}
\theoremstyle{definition}
\newtheorem{ass}{Assumption}[section]
\newcommand{\up}{\upshape}
\def\vv<#1>{\langle#1\rangle}
\newcommand{\todo}[1]{}
\newcommand{\reviseF}[1]{#1} %
\newcommand{\reviseS}[1]{#1} %
\newcommand{\retwoS}[1]{#1} 
\newcommand\deleteF{\bgroup\markoverwith{\textcolor{blue}{\rule[0.8ex]{2pt}{0.9pt}}}\ULon}
\newcommand\deleteS{\bgroup\markoverwith{\textcolor{green}{\rule[0.8ex]{2pt}{0.9pt}}}\ULon}
\title[Analytical Validation Formulas]%
{Analytical Validation Formulas for Best Estimate Calculation in Traditional Life Insurance}
\author[Simon Hochgerner, Florian Gach]{Simon Hochgerner, Florian Gach}
\address{Austrian Financial Market Authority (FMA),
  Otto-Wagner Platz 5, A-1090 Vienna
} 
\email{simon.hochgerner@gmail.com}
\email{florian.gach@gmail.com}
\date{\retwoS{June 27, 2019}}
\keywords{Solvency II, Best Estimate, Asset Liability Management, Market Consistent Valuation}
\begin{document}

\begin{abstract}
Within the context of traditional life insurance, a model-independent relationship about how the market value of assets is attributed to the best estimate, the value of in-force business and tax is established. This relationship holds true for any portfolio under run-off assumptions and can be used for the validation of models set up for Solvency~II best estimate calculation. Furthermore, we derive a lower bound for the value of future discretionary benefits. 
\reviseS{This lower bound formula is applied to publicly available insurance data to show how it can be used for practical validation purposes.}
\end{abstract}

\maketitle


\section{Introduction}
Per 1 January 2016 the European Union has implemented a new insurance regulation scheme called Solvency~II.  The own funds in this regime,  essentially, equal the excess of the market value of assets over the market  value of liabilities. \reviseS{The legal basis for calculating market values in accordance with Solvency~II is Art.~88 in \cite{L1}. One of the reasons that this is a difficult matter is that insurance liabilities are not traded in an active market. See~\cite{SS04,OBrien09,VKLP17,Delong,Laurent16} for more information and a discussion.}

In the context of traditional life insurance products, with a well-defined profit sharing mechanism, the market value of liabilities is dominated by  the \reviseS{so-called} best estimate.  
Indeed, for the aggregated Austrian market we have, per year end 2016, the following situation (in billion EUR):
\begin{center}
  \begin{tabular}{ l | c }
    \hline
    Technical provisions –- life (excluding health and index-linked and unit-linked) & 	 59   \\ 
    \hline
    Best estimate &	 58  \\ 
    \hline
    Risk margin &	 1  \\
    \hline
  \end{tabular}
\end{center}
That is, the risk margin is only about $1.8\%$ of the best estimate.
For Germany we have a similar situation:
\begin{center}
  \begin{tabular}{ l | c }
    \hline
    Technical provisions –- life (excluding health and index-linked and unit-linked) & 	 909    \\ 
    \hline
    Best estimate &	  894   \\ 
    \hline
    Risk margin &	 15  \\
    \hline
  \end{tabular}
\end{center}
Again, the risk margin amounts to $1.6\%$ of the best estimate. All these numbers are taken from the EIOPA insurance statistics web-site \cite{eiopa_is}.

Thus, own funds of the company strongly depend on the value of the best estimate ($BE$). Unfortunately, best estimate calculation for traditional life insurance books is a very difficult problem and its result generally subject to considerable modelling uncertainty.

To get an idea of the impact of errors, or uncertainty, of  $BE$ calculation, consider \retwoS{the following}, not untypical, sand-box life insurance example in Table~\ref{table:intro}.
\begin{table}[ht]
  \label{table:intro}
\centering
  \begin{tabular}{ l | c }
    \hline
    Best Estimate &	 3'864  \\ \hline
  Eligible Own Funds &	 572  \\  \hline
  Solvency Capital Requirement ($SCR$) &	 292  \\  \hline
  \end{tabular}
  \vspace{3pt}\caption{}
  \end{table}
In fact, these numbers are not arbitrary, but represent (in billion EUR) the European aggregate values taken from \cite{LTGA} for the base-line scenario. 
The corresponding solvency ratio is thus $572/292 \cong 196\%$. 
There are two (not unrelated) points to be made here: The eligible own funds are one \reviseF{order of} magnitude smaller than the technical provisions, which we have identified with the best estimate \reviseS{(by disregarding the much smaller risk margin)} for the sake of exposition. Secondly, an increase of the best estimate by only 
$1\%$ would result in a decrease of solvency ratio by $0.01\cdot 3'864/292 \cong 13$ percentage points.
An increase by $10\%$ would yield a solvency ratio below $100\%$, i.e.\ insolvency.
Here we assumed the $SCR$ to remain unchanged by an increase in $BE$. In reality the $SCR$ would increase with increasing $BE$, whence the resulting solvency ratio would be even smaller. 

In fact, \cite{LTGA} has considered mainly, but not exclusively, life insurance companies. If one were to include only traditional life insurance books in the sample, the impact of best estimate changes on solvency ratios should be expected to be even stronger.

We  \reviseS{mention} these numbers in order to stress the importance of reliable and stable best estimate calculation.

In order to calculate a best estimate for traditional life insurance in accordance with Solvency II regulation, one has to set up an asset liability model which generates cash-flows according to \reviseF{local generally accepted} accounting principles. In particular, such a model has to keep track of book and market values of assets under management and generate a return on assets that is in line with realistic management actions. Furthermore, a stochastic economic scenario generator is needed to model the asset portfolio and the relevant discount rates. 

It is therefore not surprising that there are no closed formulas \reviseS{to calculate best estimate liabilities associated to traditional life insurance contracts.
Let us remark in this context, that one of the outstanding features of  \retwoS{the} traditional life insurance business is that there exists a mechanism of profit sharing between the firm's surplus and the policy holder (see~\cite{Gerber}).
}

The goal of this paper is to present two results, Propositions~\ref{prop:gutlim} and \ref{prop:fdbgen}, which provide simple and effective tools for best estimate validation. 
These validation tools are the next best thing to having a closed formula for best estimate calculation in the sense that  they are \emph{analytical} and \emph{model-independent}. That is, they have to hold for any best estimate calculation in the context of life insurance with profit sharing -- as long as certain assumptions are met.
These assumptions are explicitly spelled out and subsequently discussed in Sections~\ref{sub:ass1} and \ref{sub:fdb-ass}.
\reviseS{Let us remark that these assumptions are tailor-made to the Austrian and German life insurance markets. We have not attempted a comprehensive study concerning the validity of the assumptions  for other European markets.}

\reviseS{As an example, we apply the estimation formula of Proposition~\ref{prop:fdbgen} to publicly available data of the Allianz Lebensversicherungs AG.}

Conclusions concerning our results are discussed in Section~\ref{sec:conclusions}. 

\retwoS{
\textbf{Acknowledgements.} We thank the anonymous reviewers for their detailed and helpful input.}

\textbf{Disclaimer.} The views expressed in this article reflect the authors' personal opinions and do not necessarily coincide with those of the Austrian Financial Market Authority (FMA).

\section{A basic equation of traditional life insurance valuation}
\subsection{Traditional life insurance company}\label{sec:1A}
We consider a company selling traditional life insurance business. Traditional means in this context that there is a procedure of profit participation which is defined by \reviseF{local generally accepted} accounting principles. 
\reviseS{
The set of insurance contracts in the company's liability book shall be denoted by 
\[
 \mathcal{X}.
\]
Hence each $x\in\mathcal{X}$ corresponds to an  individual
policy holder. The types of contracts we consider in this context are those of \cite[Chapters 3 and 4]{Gerber}: life insurance and life annuities, subject to profit participation.}
Let us assume: 
\begin{ass}\label{ass_A1}
The liability book consists of two items:
\begin{enumerate}[\up (1)]
\item
$TR$, the technical reserves;
\item
$SF$, the surplus fund.
\end{enumerate}
\end{ass}
The technical reserves are made up of the individual statutory reserves\reviseS{,
by which we mean the contract specific reserves $TR_x$ with respect to \reviseF{local generally accepted} accounting standards}. 
Thus to each insurance contract there corresponds a well-defined \reviseS{technical} reserve \reviseS{$TR_x$,} and $TR$ is the sum of all the individual reserves. 
\reviseS{We emphasize that the technical reserves $TR_x$ are assumed to be calculated as explained by \cite{Gerber}, that is, with respect to a constant technical interest rate and with respect to prudentially chosen safety loadings.}
The surplus fund, on the other hand, is an additional balance sheet item that belongs to the collection of insured viewed as a whole, whence it belongs to the liability book. However, it cannot be assigned to any individual contracts. Its purpose is to smoothen the declaration of profit participation over time.

The book values 
\[
 BV(TR) = \sum_{x\in\mathcal{X}}BV(TR_x) 
\]
and $BV(SF)$ 
of $TR$ and $SF$ \reviseF{of the technical reserves and of the surplus fund}, respectively, are well-defined balance sheet items in the realm of the local generally accepted accounting principles (see \cite[\S 144]{VAG} for the Austrian version). The objective of this section is to derive a general and model-independent relation for the associated best estimate, which is an item of the company's Solvency II balance sheet. This derivation depends on a number of assumptions which are discussed below in Subsection~\ref{sub:ass1}.

\subsection{The basic equation}\label{sub:beq}
We work on a discrete time grid $0,\ldots,T$. In practice time will be measured in years and $T$ will be the time until the run-off of (a sufficiently large percentage of) the portfolio. Usually $T$ is $60$ years or even more.
All our stochastic processes are assumed to be adapted to some underlying filtered probability space satisfying the usual assumptions.

Assume we have fixed an interest rate model \reviseF{with corresponding bank account numeraire}
\[
 B_t,
 \quad t=0,\ldots,T,
 \quad
 B_0 = 1.
\]
Let $\mathbb{Q}$ be the risk neutral martingale measure associated to our interest rate model or, more generally, our model of the economy.
For $0\le t \le s$ let $P(t,s)$ be the time $t$-value of a pay-out of one unit of currency at time $s$. Then, for any standard interest rate model, it is true that 
\begin{equation}\label{e:na}
 B_t^{-1}P(t,s) \quad\textup{ is a $\mathbb{Q}$-martingale }
\end{equation}
with respect to the filtration underlying the interest rate model. 
For instance, this holds for short rate models or Libor Market Models. 
\reviseS{We refer to \cite{Filipovic} for more information on interest rate models.}
In fact, Condition~\eqref{e:na} is \reviseS{the no-arbitrage condition as in, e.g.,}
\cite[Equ.~(3.2)]{TW}. 
In terms of simple one-year forward rates $F_t$, the bank account is given as the roll-over investment $B_t = \Pi_{s=1}^t(1+F_{\reviseF{s-1}})$.  
The best estimate $BE$ associated to $\mathcal{X}$ is then defined as
\begin{equation}\label{e:be}
 BE = E_{\mathbb{Q}}\left[ \sum_{t=1}^{T} B_t^{-1}\sum_{x\in\mathcal{X}}cf_{x, t} \right]
\end{equation}
where $cf_{x, t}$ is the cash-flow at time $t$ generated by contract $x$.
The cash flow has to take into account all relevant premia, benefits and costs -- see \cite[Art.~28]{L2}.

Let 
\begin{equation}
\label{e:BV}
 BV_t = BV_t(TR) + BV_t(SF)
\end{equation}
denote the book value of the liability portfolio at time $t$. The liability portfolio is covered by assets whose book value is assumed to equal $BV_t$. That is, we make the following simplification:
\begin{ass}\label{ass_A2} 
\reviseF{The company does not have equity in its statutory balance sheet, i.e.\ with respect to \reviseF{local generally accepted} accounting rules.}
\end{ass}
\reviseF{In other words, Assumption~\ref{ass_A2} requires the book value of assets to be equal to the book value of liablities, which is, of course, a very unrealistic assumption.}
Its meaning is discussed in Subsection~\ref{sub:ass1} \reviseF{below} where we will see that no generality is lost. We denote the total market value of the company's assets at time $t$ by
\[
 MV_t.
\]
The market value of assets $MV_t$ will change from one period to the next due to market movements. The same is true, albeit in a more complicated manner, for the book value \reviseF{of assets,} $BV_t$: it changes due to coupon payments, dividend yield, \reviseF{etc., as well as due to} realization of unrealized gains (whose value depends again on market movements). The precise manner in which this happens depends on the company's management rules and valuation choices such as the lower of cost or market principle.\footnote{\reviseF{The lower of cost or market principle, when applied in its strict form, requires a company to depreciate the book value of an asset whenever its market value falls below its current book value. When applied in the mild form, the book value is depreciated only if the market value is expected to remain below its current book value for an extended period of time.}} Regardless of the specific management rules and choices, we denote the company's book value return on $BV_t$ by
\[
roa_{t+1}
\]
and emphasize that this is the book value return \emph{before} corporate tax. 

For the next statement we note that we call market value induced changes in book value those changes which follow \reviseF{from the application of valuation principles, such as} the lower of cost or market principle, or the realization of unrealized gains. 

\begin{criterion}[Monetary conservation principle]\label{crit:2}
  All changes in the book value of assets $BV_t$ are either due to a cash-flow or due to a market value induced change in book value. 
\end{criterion}
This property is  fundamental. It can be viewed as
\reviseS{a} no-leakage and self-financing property: up to cash-flows, changes in the book value of the asset portfolio can only be due to interest rate or other (such as: stock) market effects. The above criterion is important in practice as it provides a simple yet challenging validation test for the inspection of real models\reviseF{: see} Proposition~\ref{prop:gutlim}.

Let us further elaborate on the no-leakage statement. To this end, we denote by $tax_t$ the corporate tax and by $sh_t$ the shareholder gains at $t$. 
Now, since $cf_t = \sum_{x\in\mathcal{X}}cf_{x,t}$ already includes all policy holder and cost cash-flows, the no-leakage criterion amounts to
\begin{align}\label{e:cf1}
 BV_t &=  BV_{t-1} - cf_{t} - sh_{t} - tax_{t} + roa_t
\end{align}


Let us define the \emph{unexpected return} by
\[
 ur_t :=  roa_t - F_{t-1} BV_{t-1}
\]
 where
\[
 F_{t-1} = \frac{B_t}{B_{t-1}} - 1
\]
is the simple forward rate from $t-1$ to $t$ implied by the interest rate model.

\reviseS{
It follows that
\begin{align*}
  B_T^{-1}BV_T
  &= BV_0 + \sum_{t=1}^T\Big(B_t^{-1}BV_t-B_{t-1}^{-1}BV_{t-1}\Big) \\ 
  &= BV_0 + \sum_{t=1}^T\Big(B_t^{-1}(BV_{t-1} - cf_{t} - sh_{t} - tax_{t} + roa_t)
        - B_{t-1}^{-1}BV_{t-1}\Big) \\
  &= BV_0 + \sum_{t=1}^T\Big(
        (B_t^{-1}( - cf_{t} - sh_{t} - tax_{t})
        + B_t^{-1}(roa_t - F_{t-1}BV_{t-1})
    \Big) \\
  &= BV_0 + \sum_{t=1}^T\Big(
        (B_t^{-1}( - cf_{t} - sh_{t} - tax_{t})
        + B_t^{-1}ur_t
        \Big).
\end{align*}
}
Taking the expected value with respect to the risk-neutral measure $\mathbb{Q}$, this becomes
\begin{equation}\label{e:gutlim}
  BV_0 + E_{\mathbb{Q}}\left[ \sum_{t = 1}^T B_t^{-1} ur_t \right]
  = BE + VIF + TAX + E_{\mathbb{Q}}\left[ B_T^{-1}BV_T \right],
\end{equation}
where 
\[
  VIF = E_{\mathbb{Q}} \left[ \sum_{t = 1}^T B_t^{-1} sh_t \right]
\]
is the so-called \emph{value of in-force business} and
\[
	TAX = E_{\mathbb{Q}} \left[ \sum_{t = 1}^T B_t^{-1} tax_t \right]
\]
is the value of \reviseF{corporate} tax payments.
The value of in-force business is the model dependent part of the \emph{market consistent embedded value} $MCEV$ which is generally expressed as 
\[
 MCEV = VIF + FC_0,
\]
where $FC_0$ is the market value of free capital at time $t = 0$. The $MCEV$ is a measure for the shareholder to determine how well the money is invested. Up to required capital and associated frictional costs, this definition coincides with that of \cite{CFO}.

Let us call $UG_t := MV_t - BV_t$ the unrealized gains.

\begin{proposition}[Basic equation of market consistent valuation]\label{prop:gutlim}
Let $T$ be the projection horizon. Then 
\begin{equation}\label{e:gutlim2}
  BV_0 + UG_0 =  BE + VIF + TAX + E_{\mathbb{Q}}\Big[B_T^{-1} MV_T \Big].
\end{equation}
Moreover, if $BV_t(SF)$ is bounded by $BV_t(TR)$ then
$E_{\mathbb{Q}}\Big[B_T^{-1} MV_T \Big]/MV_0 \approx 0$.   
\end{proposition}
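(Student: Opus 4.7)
My plan is to carry out the telescoping computation that already produced~\eqref{e:gutlim}, but applied to the market value process $MV_t$ rather than the book value process $BV_t$. First, by applying the monetary conservation principle (Criterion~\ref{crit:2}) at market value, I would write the dynamics $MV_t = MV_{t-1} + r_t^{MV} - cf_t - sh_t - tax_t$, where $r_t^{MV}$ denotes the one-period capital gain (in currency units) on the asset portfolio; the same cash outflows $cf_t$, $sh_t$, $tax_t$ appear because these are the only genuine leakages from the balance sheet. An exact analogue of the telescoping leading to~\eqref{e:gutlim} then yields
\[
 E_{\mathbb{Q}}\!\left[B_T^{-1} MV_T\right] = MV_0 - BE - VIF - TAX + E_{\mathbb{Q}}\!\left[\sum_{t=1}^T B_t^{-1}\bigl(r_t^{MV} - F_{t-1} MV_{t-1}\bigr)\right].
\]

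Next, I would use no-arbitrage to kill the remaining sum. Applied asset-by-asset, the condition~\eqref{e:na} (extended from zero-coupon bonds to the traded instruments making up the portfolio) amounts to $E_{\mathbb{Q}}[r_t^{MV} - F_{t-1} MV_{t-1}\mid\mathcal{F}_{t-1}] = 0$, so the residual term vanishes and one obtains $MV_0 = BE + VIF + TAX + E_{\mathbb{Q}}[B_T^{-1} MV_T]$. Substituting $MV_0 = BV_0 + UG_0$ then yields~\eqref{e:gutlim2}. An alternative route, avoiding a second telescoping, is to subtract~\eqref{e:gutlim} from the claimed identity and verify directly that $B_t^{-1} UG_t + \sum_{s \le t} B_s^{-1} ur_s$ is a $\mathbb{Q}$-martingale; inserting $UG_t = MV_t - BV_t$ together with the book-value recursion from Criterion~\ref{crit:2} reduces the one-step increment to exactly the asset-portfolio martingale difference invoked above.

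For the ``moreover'' clause, my plan is to observe that the bound $BV_t(SF) \leq BV_t(TR)$ gives $BV_T \leq 2\,BV_T(TR)$, and the choice of $T$ as run-off horizon makes $BV_T(TR)$ essentially vanish because the corresponding policies have expired. By Assumption~\ref{ass_A2} the book value of assets mirrors this decay, and realistically those assets have been liquidated to fund the payouts, so $MV_T$ is of the same order as $BV_T$; after discounting by $B_T^{-1}$ over typically sixty years or more, $E_{\mathbb{Q}}[B_T^{-1} MV_T]$ is negligibly small compared to $MV_0$. I expect the main obstacle to be the jump from the per-instrument no-arbitrage condition~\eqref{e:na} to the full-portfolio martingale statement used in the second paragraph: making it rigorous requires either an explicit modelling hypothesis on the economic scenario generator to the effect that every traded asset under management is priced consistently under~$\mathbb{Q}$, or an asset-by-asset derivation that accounts for dividend and coupon payments being funnelled through the $cf_t$, $sh_t$, $tax_t$ channels.
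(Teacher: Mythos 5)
Your argument is correct and rests on the same two pillars as the paper's proof --- a telescoping of discounted values plus the risk-neutral martingale property of traded asset prices --- but it is organised differently. The paper does \emph{not} redo the telescoping at market value: it starts from~\eqref{e:gutlim} (the book-value telescoping already performed before the proposition) and then evaluates $E_{\mathbb{Q}}\bigl[\sum_{t}B_t^{-1}ur_t\bigr]$ \emph{asset by asset}, writing $ur_{a,t}=cf_{a,t}+a^*_t-(1+F_{t-1})a^*_{t-1}$, telescoping the book value $a^*_t$ of each asset, and invoking $a_0=E_{\mathbb{Q}}\bigl[\sum_{t}B_t^{-1}cf_{a,t}+B_T^{-1}a_T\bigr]$ to conclude $E_{\mathbb{Q}}\bigl[\sum_t B_t^{-1}ur_{a,t}\bigr]=UG_{a,0}-E_{\mathbb{Q}}[B_T^{-1}UG_{a,T}]$; summing over assets (with $UG_{a,t}=0$ at purchase for reinvested assets) gives $UG_0-E_{\mathbb{Q}}[B_T^{-1}UG_T]$ and hence~\eqref{e:gutlim2}. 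Your second, ``alternative'' route --- checking that $B_t^{-1}UG_t+\sum_{s\le t}B_s^{-1}ur_s$ is a $\mathbb{Q}$-martingale --- is essentially this argument in disguise. Your primary route (telescoping $MV_t$ directly) is shorter, but note two points of care: Criterion~\ref{crit:2} is stated for \emph{book} values, so the market-value recursion $MV_t=MV_{t-1}+r_t^{MV}-cf_t-sh_t-tax_t$ is an additional (if natural) postulate rather than a quotation of the criterion, and $r_t^{MV}$ must be the income-inclusive total return (coupons and dividends retained in the portfolio), not merely the capital gain, for $E_{\mathbb{Q}}[r_t^{MV}-F_{t-1}MV_{t-1}\mid\mathcal{F}_{t-1}]=0$ to hold. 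The obstacle you correctly identify --- extending~\eqref{e:na} from zero-coupon bonds to every asset under management --- is present in the paper's proof as well (it is exactly the step ``Because of~\eqref{e:na} we have $a_0=E_{\mathbb{Q}}[\dots]$'' applied to a generic asset $a$), so you are not worse off there; what the paper's per-asset bookkeeping buys is an explicit demonstration that the result is independent of the reinvestment strategy and an explicit origin for the terminal term $E_{\mathbb{Q}}[B_T^{-1}UG_T]$. Your treatment of the ``moreover'' clause ($BV_T\le 2\,BV_T(TR)\approx 0$ at run-off, with $MV_T$ of the same order) matches the paper's informal discussion.
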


In this statement $\approx$ means equality for all practical valuation purposes.\footnote{\reviseS{As a rule of thumb the relative error should not exceed 1\textperthousand, otherwise the impact of potential error on own funds would be too large -- compare with the sand box example in Table~\ref{table:intro}.}} We remark that there is no advantage in defining the relation $\approx$  in a more mathematical manner. Practically the remainder term 
$E_{\mathbb{Q}}\Big[B_T^{-1} MV_T \Big]/MV_0$ should be of the same order as the result of the leakage test, which would be the difference of the two sides of Equation~\eqref{e:gutlim2}. 

\reviseS{
Laimer \cite{Laimer} has verified in her diploma thesis that Equation \eqref{e:gutlim2} does indeed hold with $E_{\mathbb{Q}}[B_T^{-1} MV_T ]/A_0 = 0$ up to numerical errors. To do so, she employed a best estimate calculation tool proprietary to the FMA Austria and used several concrete traditional life insurance portfolios.} 

\begin{proof}
Let $\mathcal{A}_t$ denote the company's set of assets under management at time $t$.  
It follows that 
$ur_t = \sum_{a\in\mathcal{A}_{t-1}} ur_{a, t}$ where $ur_{a, t}$ is the contribution stemming from asset $a$ and \reviseF{where} the sum is over all assets in the portfolio at time $t-1$.
Now, for any asset $a\in\mathcal{A}_0$, we have 
\begin{align*}
 \sum_{t=1}^T B_t^{-1}ur_{a, t} 
 &=  \sum_{t=1}^T B_t^{-1}\Big(cf_{a, t} + a^*_t - (1+F_{t-1})  a^*_{t - 1}\Big)\\
 &=  \sum_{t=1}^T \Big( B_t^{-1}cf_{a, t} +  B_t^{-1}a^*_{t} -  B_{t-1}^{-1} a^*_{t - 1} \Big)\\
 &=  \sum_{t=1}^{T} B_t^{-1}cf_{a, t} 
 + B_{T}^{-1} a^*_{T} - a^*_{0} \\
 &= 
 \reviseS{\sum_{t=1}^{T} B_t^{-1}cf_{a, t} 
 + B_{T}^{-1} a_{T} - a^*_{0} 
 - B_{T}^{-1}(a_{T} 
 - a^*_{T}) }
\end{align*}
where $a^*_{t}$ (resp.\ $a_t$) is the time $t$ book (resp.\ market) value of $a$  
\reviseF{and } $cf_{a,t}$ is the asset's cash-flow (coupon, dividend payment, etc.) at $t$.
\reviseS{If $a$ has a maturity $T_a$ within the projection horizon such that $T_a \le T$, then book and market value coincide at $T_a$ such that $a_{T_a} = a^*_{T_a}$, and $cf_{a, t}=a_t=a_t^*=0$ for $T_a< t \le T$.
On the other hand, if $T_a>T$, then  
$a_{T}  - a^*_{T} = UG_{a,T}$ are the unrealized gains of $a$ at time $T$.} \reviseS{Because of \eqref{e:na} we have}
\[
a_0 =
E_{\mathbb{Q}}\left[ \sum_{t = \reviseF{1}}^{T} B_t^{-1} cf_{a, t} + B_{T}^{-1}a_{T} \right],
\]
\reviseS{which} implies that
\begin{equation} \label{e: ug2}
 E_{\mathbb{Q}}\Big[\sum_{t=1}^T B_t^{-1}ur_{a, t} \Big] 
 = a_0 - a_0^* \reviseF{- E_{\mathbb{Q}}\left[B_{T}^{-1}(a_{T} - a^*_{T})\right]}
 = UG_{a, 0} \reviseS{- E_{\mathbb{Q}}\left[B_{T}^{-1}UG_{a,T}\right]}.
\end{equation}
Assets that are bought in the course of reinvestment at $t>0$ satisfy $UG_{a, t} = 0$\reviseS{, because book value and market value coincide at time of purchase}. \reviseF{It follows} that 
\[
  E_{\mathbb{Q}}\Big[\sum_{t=1}^T B_t^{-1}ur_t\Big] 
 =  E_{\mathbb{Q}}\Big[\sum_{t=1}^T\sum_{a\in\mathcal{A}_{t-1}} B_t^{-1}ur_{a, t} \Big] 
 = UG_0 
 \reviseS{-E_{\mathbb{Q}}\left[B_T^{-1}UG_T\right]},
\]
where $\mathcal{A}_t$ denotes the asset portfolio at time $t$, and the result is independent of the particular reinvestment strategy.
The statement now follows from \eqref{e:gutlim}.
\end{proof}

\subsection{Discussion of assumptions}\label{sub:ass1}
Assumption~\ref{ass_A1} is very generic. All we need at this point are well-defined book and market values for the liability side of the balance sheet. If the cash-flows in Equation~\eqref{e:be} depend on additional provisions, these should be accordingly added to $BV_0$ and $UG_0$ in \eqref{e:gutlim2}. 

Assumption~\ref{ass_A2} is only at first sight a strong constraint. Actual companies will hold strictly positive equity. The relevant position should be added to the left hand side of \eqref{e:gutlim2}. For the right hand side, however, the statement $E_{\mathbb{Q}}\Big[B_T^{-1} MV_T \Big]/MV_0 \approx 0$ will now not be true anymore. The position $MV_T$ will still contain own funds. For actual validation purposes regarding \eqref{e:gutlim2} it is thus advisable to keep track of equity separately.

\section{An analytic lower bound for future discretionary benefits}\label{sec:fdb}
The goal of this section is to derive an analytic, i.e.\ model-independent, formula for the value of future discretionary benefits.  \reviseF{This is achieved under certain assumptions which are} discussed in Section~\ref{sub:fdb-ass} below.

\subsection{The lower bound formula}
The best estimate can be written as
\[
BE = GB + FDB
\]
where $GB$ \reviseF{stands for \emph{guaranteed benefits} and} is the value of all future cash flows that are already guaranteed at time of calculation $t=0$. Thus $GB$ is independent from all future developments and therefore purely deterministic. In particular, its value is independent of all management actions and economic scenarios.
On the other hand, $FDB$ stands for \emph{future discretionary benefits} and denotes the value of those cash flows that arise via the (future) profit sharing mechanism.  Solvency~II requires that $FDB$ be  \reviseF{determined and reported as a stand-alone} part of the best estimate\reviseF{: see \cite[Art.~25]{L2}.}  

\begin{remark}
\label{rem:GB}
\reviseS{We emphasize that this does not mean that the guaranteed benefits are fixed from the policy holder's perspective. In reality, benefits could be influenced by time of surrender, time of death, interest rate movements, tax incentives or other unknown variables. However, the point is that these variables contribute to the guaranteed benefits (as defined by the reporting template \cite[Template~S.12.01.01]{L3templ}) with their \emph{expected values.} It is only in this sense, that the guaranteed benefits are model-independent and fixed cash-flows.}
\end{remark}

The profit sharing mechanism dictates that the collective of policy holders receives a yearly accounting flow $ph_t^*$. 

\reviseF{We make a few generic assumptions:\footnote{We reiterate that we only consider with-profit contracts. All assumptions are discussed in Section~\ref{sub:fdb-ass}.}}

\begin{ass}\label{ass_B1}
We assume that  \reviseF{the profit sharing} mechanism is clearly defined by legislature and stable management rules. 
\end{ass}
\begin{ass}\label{ass_B2}
The gross policy holder profit participation rate $gph$ is constant. This is the rate with which the policy holder participates in the company's declared gross surplus \reviseF{$gs^*$ (if positive).} It does \emph{not} say anything about the company's return.
\end{ass}
We emphasize that $ph_t^*$ is an accounting flow and \emph{not} a cash flow. \reviseF{As such it is not paid out to the policy holder at time $t$, but rather increases the book value of liabilities.} 
Observe that
\begin{align}\label{e:ph1}
    PH^*
    &:= E_{\mathbb{Q}}\Big[\sum B_t^{-1} ph_t^*\Big]\\\notag
    &= gph \;E_{\mathbb{Q}}\Big[\sum B_t^{-1} gs_+^*(t)\Big]\\\notag
    &\ge gph\; E_{\mathbb{Q}}\Big[\sum B_t^{-1} gs^*(t)\Big]\\\notag
    &= gph\; (VIF + PH^* + TAX )\notag
\end{align}
 \reviseF{where $sh_t$, $tax_t$ are the respective shareholder, tax cash flows, and $x_+=\max(x,0)$. Here we have used the splitting $gs_t^* = sh_t + ph_t^* + tax_t$ together with the definitions of $VIF$ and $TAX$ from Section~\ref{sub:beq}.} Note that $sh_t$ can be negative, which corresponds to the case of shareholder capital injection.

\begin{ass}\label{ass_B3}
The policy holder participation $ph_t^*$ is negatively correlated with discount rate movements\reviseF{; i.e.,} policy holder participation will generally increase when interest rates increase: $Corr[B_t^{-1},ph_t^*] < 0$.
\end{ass}
\begin{ass}\label{ass_B4}
\reviseS{We assume, for the purpose of this section, that the liability book consists of only one contract and that the time to maturity of this contract is $M$.} 
\end{ass}
\begin{ass}\label{ass_B5}
\reviseF{The technical reserves $TR_t$ evolve deterministically.} Insurance technical gains are deterministic\reviseF{; i.e.,} we do not consider stochastic mortality modelling or stochastic (and/or dynamical) surrender behavior.
\end{ass}

Notice that the future discretionary benefits \retwoS{received by the policy holder} at time of maturity $M$ are exactly the sum $\sum_{t\le M}ph_t^*$ of accumulated policy holder profits.
This is actually a tricky point and holds only because we assume that policy holder survival probabilities (mortality, surrender, etc.) have already been taken into account. At the same time we do not list this point as an assumption, because it only means that we regard cash flows of surviving policy holders.

Notice that
\begin{align}
  FDB
  &= 
  E_{\mathbb{Q}}
  \Big[ \reviseS{B_M^{-1}}\sum_{t\le M} ph_t^*\Big]\notag \\
  &= 
  E_{\mathbb{Q}}\Big[ B_M^{-1}\Big]
    \cdot E_{\mathbb{Q}}\Big[\sum_{t\le M}ph_t^*\Big]
    + Cov\Big[B_M^{-1}, \sum_{t\le M}ph_t^* \Big]
    \notag \\
  &\ge 
    \reviseS{P(0,M)}
    \cdot E_{\mathbb{Q}}\Big[\sum_{t\le M}ph_t^*\Big]
    - 1\cdot  SD\Big[B_M^{-1}\Big]\cdot SD\Big[\sum_{t\le M}ph_t^*\Big]
\label{e:ph2}
\end{align}
\reviseS{where 
$P(0,M) = E_{\mathbb{Q}}[ B_M^{-1}]$ is the discount factor at time $0$ and 
we have made use of the fact that the correlation  $Corr[B_M^{-1}, \sum_{t\le M}ph_t^*]$ is bounded from below by $-1$.}

\reviseS{
\begin{remark}[Standard deviation]\label{rem:SD}
For a random variable $X$, we shall denote the standard deviation by $SD[X] = \sqrt{E[(X-E[X])^2]}$.
In formula~\eqref{e:ph2} the standard deviation is understood with respect to the risk neutral measure $\mathbb{Q}$. 
Often the standard deviation is denoted by $\sigma$ when viewed as a parameter to be inferred from a financial or econometric model. 
We have chosen the notation $SD[\cdot]$ to reflect the purely statistical approach of Assumption~\ref{ass_B8}.  
Of course, the statistical approach could also be replaced by a parametric model. However, since formula~\eqref{e:eta} depends on the product of two standard deviations, the gain in accuracy of the lower bound~\eqref{e:fdb} by using  a more refined model to estimate $SD[\cdot]$ is  limited.  
\end{remark}
}

On the other hand,  $Corr[B_t^{-1},ph_t^*] < 0$ yields
\begin{align}\label{e:ph3}
  PH^*
  &= E_{\mathbb{Q}}\Big[\sum_{t\le T}B_t^{-1}ph_t^* \Big]
  \le \sum_{t\le T}E_{\mathbb{Q}}[B_t^{-1}] \cdot E_{\mathbb{Q}}[ph_t^*]
  \le 
  \max_{1\le t\le T}\reviseS{P(0,t)} 
  \cdot E_{\mathbb{Q}}\Big[\sum_{t\le T}ph_t^*\Big].
\end{align}
Note that, if forward rates are positive, this maximum is simply $\max_{1\le t\le M}\reviseS{P(0,t)} = (1+F_0)^{-1}$. 
Currently interest rates are negative at the short end. 
Nevertheless, \reviseS{for the EIOPA curve per year-end 2017 as displayed in Table~\ref{table:discount rates},} the term 
\reviseS{$\max_{1\le t\le M}P(0,t)\cong 1.005$} 
is very close to $1$, and we will simply set \reviseF{it equal to $1$} for better readability and because the error is negligible.  The point is, in any case, that this term is deterministic and can be calculated from the initial forward curve.

\begin{proposition}\label{prop:fdb}
Let $A_0=MV_0=BV_0+UG_0$, $A_T = E_{\mathbb{Q}}[B_T^{-1}MV_T]$ and
\begin{equation}
\eta :=
\reviseS{P(0,M)}
\cdot\Big(
1 - \frac{SD[B_M^{-1}]}{\retwoS{P(0,M)}}
\cdot
\frac{SD[\sum_{t<M}ph_t^*]}{E_{\reviseS{\mathbb{Q}}}[\sum_{t<M}ph_t^*]}
\Big)\cdot\frac{gph}{1-gph}
\label{e:eta}
\end{equation}
Then we have the lower bound
\begin{equation}\label{e:fdb}
  FDB \ge \frac{\eta}{1+\eta}(A_0 - A_T - GB)
\end{equation}
\end{proposition}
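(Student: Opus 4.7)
The plan is to combine Proposition~\ref{prop:gutlim} with the three chain inequalities \eqref{e:ph1}, \eqref{e:ph2}, \eqref{e:ph3} already derived in this section in order to bound $VIF+TAX$ in terms of $FDB$, and then to invert. First I would apply Proposition~\ref{prop:gutlim} in its rearranged form
\[
  A_0 - A_T = BE + VIF + TAX = GB + FDB + (VIF + TAX),
\]
so the entire task reduces to showing $VIF + TAX \le FDB/\eta$.

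Next I would bound $FDB$ from below in terms of $PH^*$. Starting from \eqref{e:ph2}, factoring out $E_{\mathbb{Q}}[\sum_{t\le M}ph_t^*]$ from the right-hand side yields
\[
  FDB \;\ge\; E_{\mathbb{Q}}\Big[\sum_{t\le M}ph_t^*\Big]\cdot
  \Big(P(0,M) - \frac{SD[B_M^{-1}]\cdot SD[\sum_{t\le M}ph_t^*]}{E_{\mathbb{Q}}[\sum_{t\le M}ph_t^*]}\Big).
\]
Inequality \eqref{e:ph3}, together with the approximation $\max_{1\le t\le M} P(0,t)\approx 1$ and the single-contract Assumption~\ref{ass_B4} with maturity $M$, gives $PH^* \le E_{\mathbb{Q}}[\sum_{t\le M}ph_t^*]$. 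Substituting this into the previous display (assuming the parenthesized factor is positive, which is the implicit non-degeneracy condition for the lower bound to be informative) produces
\[
  FDB \;\ge\; PH^*\cdot
  \Big(P(0,M) - \frac{SD[B_M^{-1}]\cdot SD[\sum_{t\le M}ph_t^*]}{E_{\mathbb{Q}}[\sum_{t\le M}ph_t^*]}\Big).
\]

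Then I would invoke \eqref{e:ph1}, which after rearrangement reads $PH^*\ge \frac{gph}{1-gph}(VIF+TAX)$. Plugging this into the previous bound and using the definition \eqref{e:eta} of $\eta$ yields exactly
\[
  FDB \;\ge\; \eta\,(VIF+TAX),
\]
i.e.\ $VIF+TAX \le FDB/\eta$. Combining with the opening identity,
\[
  A_0 - A_T - GB \;=\; FDB + (VIF+TAX) \;\le\; FDB\Big(1 + \tfrac{1}{\eta}\Big) \;=\; \tfrac{1+\eta}{\eta}\,FDB,
\]
which rearranges to \eqref{e:fdb}.

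The main obstacle is conceptual rather than computational: the three inequalities go in apparently opposite directions (\eqref{e:ph2} lower-bounds $FDB$ by $E_{\mathbb{Q}}[\sum ph_t^*]$, \eqref{e:ph3} upper-bounds $PH^*$ by the same expectation), and it is only because the pivot quantity $E_{\mathbb{Q}}[\sum ph_t^*]$ can be eliminated in the same direction on both sides that one obtains a clean lower bound $FDB\ge \text{const}\cdot PH^*$. The positivity of the parenthesized factor $P(0,M)-\frac{SD[B_M^{-1}]\,SD[\sum ph_t^*]}{E_{\mathbb{Q}}[\sum ph_t^*]}$ is what makes the monotone substitution legitimate; outside that regime the inequality is still true but vacuous.
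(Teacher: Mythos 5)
Your proof is correct and follows essentially the same route as the paper, whose proof is simply the one-line statement that the inequality follows from \eqref{e:ph1}, \eqref{e:ph2}, \eqref{e:ph3} and Proposition~\ref{prop:gutlim}; you have filled in exactly the intended chain (bound $FDB$ below by $\eta\,(VIF+TAX)$ via the pivot $E_{\mathbb{Q}}[\sum ph_t^*]$, then eliminate $VIF+TAX$ using the basic equation). Your explicit remark that the parenthesized factor must be nonnegative for the monotone substitution is a useful clarification the paper leaves implicit.
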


\begin{proof}
The inequality follows from equations~\eqref{e:ph1}, \eqref{e:ph2}, \eqref{e:ph3} and Proposition~\ref{prop:gutlim}.
\end{proof}

\reviseS{
\subsection{A lower bound formula for various maturities}\label{sec:fdb-gen}
For the purpose of this section, we shall keep all of the above assumptions except Assumption~\ref{ass_B4}. 
We replace the latter as follows:
\begin{ass}\label{ass_B4X}
Assets are not attributed to individual contracts. 
\end{ass}

Let $\mathcal{X}$ denote the (finite) set of all contracts in the liability book. For each $x\in\mathcal{X}$ define $\eta_x$ according to formula~\eqref{e:eta} with respect to the contract's maturity $M_x$. Further, let
\[
 D_x := \frac{\eta_x}{1+\eta_x}.
\]
Recall from Assumption~\ref{ass_A1} that the statutory technical reserves are the sum of the individual reserves, $BV(TR) = \sum_{x\in\mathcal{X}}BV(TR^x)$. Consider 
\begin{itemize}
    \item 
    $A_0^x := A_0 \cdot BV(TR^x)/BV(TR)$, which is a proportional attribution of market values to individual contracts according to the principle~\eqref{ass_B4X};\footnote{The attribution can only depend on the book value, since $TR_x$ is a statutory reserve and therefore does not have a market value.}
    \item
    $GB^x$ denotes the value of guaranteed benefits (at time $0$) of contract $x$, and $GB = \sum_{x\in\mathcal{X}}GB^x$;
    \item 
    $FDB = \sum_{x\in\mathcal{X}}FDB_x$.
\end{itemize}
\begin{ass}
\label{ass:FDBx}
Suppose that Proposition~\ref{prop:fdb} can be applied to each contract such that
\[
 FDB_x \ge D_x(A_0^x - A_{M_x}^x - GB^x).
\]
\end{ass}
Notice that we do not assume $A_{M_x}^x=0$. This is  to allow for cross-financing between contracts, after time $M_x$. Define the weights
\[ 
 w_x :=  \frac{A_0^x-GB^x}{A_0-GB}
\]
and the \emph{weighted depreciation factor}
\begin{equation}
    \label{e:D}
    D 
    := \sum_{x\in\mathcal{X}} w_x D_x.
\end{equation}
It follows that 
\begin{align*}
    FDB 
    &= \sum_{x\in\mathcal{X}}FDB_x 
    \ge \sum_{x\in\mathcal{X}}D_x(A_0^x-A_{M_x}^x-GB^x)\\
    &= D\sum_{x\in\mathcal{X}} (A_0^x-A_{M_x}^x-GB^x)
      - \sum_{x\in\mathcal{X}}(D-D_x)(A_0^x-A_{M_x}^x-GB^x) \\
    &= D(A_0 - GB) - \sum_{x\in\mathcal{X}}D_x A_{M_x}^x
\end{align*}
where we use that the weights $w_x$ are chosen such that $\sum_{x\in\mathcal{X}}(D-D_x)(A_0^x-GB^x) = 0$.
The quantities $A_{M_x}^x$ correspond to the fraction of $A_0^x$ which remains in the model after time $M_x$. Unfortunately, these quantities are model dependent and are, therefore, a priori unknown. The term  $\sum_{x\in\mathcal{X}}D_x A_{M_x}^x$ accounts for the cross-financing. If all of $A_0^x$ were to be accounted for by a cash-flow up to, and including, time $M_x$, then $A^x_{M_x}=0$. 

\begin{proposition}\label{prop:fdbgen}
Assume that there exists $F>0$ such that $\sum_{x\in\mathcal{X}}D_x A_{M_x}^x \le F$. Then
\begin{equation}
    \label{e:FDBgen}
    FDB \ge D(A_0 - GB) - F.
\end{equation}
\end{proposition}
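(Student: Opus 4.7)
The plan is to assemble the bound from three essentially independent ingredients: a per-contract lower bound from Assumption~\ref{ass:FDBx}, an algebraic cancellation identity engineered by the weights $w_x$, and finally the uniform upper bound $F$ on the cross-financing term.

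First I would apply Assumption~\ref{ass:FDBx} contract by contract to obtain $FDB_x \ge D_x(A_0^x - A_{M_x}^x - GB^x)$, and then sum over $x\in\mathcal{X}$ using $FDB = \sum_{x\in\mathcal{X}}FDB_x$. This yields
\[
 FDB \ge \sum_{x\in\mathcal{X}} D_x(A_0^x - GB^x) - \sum_{x\in\mathcal{X}} D_x A_{M_x}^x.
\]
The next step is to rewrite the first sum on the right by splitting $D_x = D + (D_x - D)$, so that
\[
 \sum_{x\in\mathcal{X}} D_x(A_0^x - GB^x) = D\sum_{x\in\mathcal{X}}(A_0^x - GB^x) + \sum_{x\in\mathcal{X}}(D_x - D)(A_0^x - GB^x).
\]

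Now I would invoke the definition of $D$ as the weighted average of the $D_x$ with weights $w_x = (A_0^x - GB^x)/(A_0 - GB)$; by construction this forces $\sum_{x\in\mathcal{X}}(D_x - D)(A_0^x - GB^x) = 0$, so the cross term vanishes. Combined with the additivity $\sum_{x\in\mathcal{X}}(A_0^x - GB^x) = A_0 - GB$ (which follows from the proportional attribution $A_0^x = A_0\cdot BV(TR^x)/BV(TR)$ and from $GB = \sum_{x\in\mathcal{X}}GB^x$), this gives
\[
 FDB \ge D(A_0 - GB) - \sum_{x\in\mathcal{X}} D_x A_{M_x}^x.
\]
Applying the hypothesis $\sum_{x\in\mathcal{X}} D_x A_{M_x}^x \le F$ then yields the claimed inequality \eqref{e:FDBgen}.

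The only substantive step is verifying the weighted-average cancellation, which is the reason the particular choice of $w_x$ in the definition of $D$ was made; everything else is a direct consequence of Assumption~\ref{ass:FDBx}, the additivity of the attribution, and the hypothesis on $F$. Since this calculation is essentially carried out already in the paragraph preceding the statement, no further obstacle is expected.
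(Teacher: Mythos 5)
Your proposal is correct and follows essentially the same route as the paper: apply Assumption~\ref{ass:FDBx} contract by contract, sum, exploit the choice of weights $w_x$ so that $\sum_{x\in\mathcal{X}}(D-D_x)(A_0^x-GB^x)=0$, and bound the residual cross-financing sum by $F$. The only difference is cosmetic — you peel off the $\sum_x D_x A_{M_x}^x$ term before performing the weighted-average cancellation, whereas the paper carries it through the splitting — and the algebra is identical.
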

Existence of such an $F$ means that cross-financing is bounded by an a priori estimated quantity $F$.
See Section~\ref{sub:fdb-num} for a concrete application of this formula.
}

\subsection{Concrete numbers}\label{sub:fdb-num}
\reviseF{
Let us apply formula \eqref{e:FDBgen} to publicly available data  from the Allianz Lebensversicherungs AG.
Allianz Lebensversicherungs AG is a German life insurance company which has profit sharing contracts in its liability book.
The data in Table~\ref{table:allianz} is taken from the publicly available reports \cite{GB} and \cite{SFCR}, which concern the accounting year 2017. Hence the applied interest rate information from  Table~\ref{table:spot rates} is also with respect to year-end 2017.  

\begin{table}[ht]
\centering
\begin{tabular}{cclll}
\multicolumn{1}{l}{\textbf{Symbol}} & \multicolumn{1}{c}{\textbf{Value}} & \textbf{Source} & \textbf{Name in source} &  \\ \cline{1-4}
\multicolumn{1}{|c|}{$BV_0$} & \multicolumn{1}{r|}{192.3} & \multicolumn{1}{c|}{\cite[p.~46]{SFCR}} & \multicolumn{1}{l|}{Versicherung mit \"Uberschussbeteiligung} &  \\ \cline{1-4}
\multicolumn{1}{|c|}{$UG_0$} & \multicolumn{1}{r|}{43.2} & \multicolumn{1}{c|}{\cite[p.~46]{GB}} & \multicolumn{1}{l|}{Stille Reserven der 
einzubeziehenden Kapitalanlagen} &  \\ \cline{1-4}
\multicolumn{1}{|c|}{$SF_0$} & \multicolumn{1}{r|}{10.4} & \multicolumn{1}{c|}{\cite[p.~52]{SFCR}} & \multicolumn{1}{l|}{\"Uberschussfonds} &  \\ \cline{1-4}
\multicolumn{1}{|c|}{$GB$} & \multicolumn{1}{r|}{154.1} & \multicolumn{1}{c|}{\cite[p.~46]{SFCR}} & \multicolumn{1}{l|}{Bester Sch\"atzwert: Wert f\"ur garantierte Leistungen} &  \\ \cline{1-4}
\multicolumn{1}{|c|}{$FDB$} & \multicolumn{1}{r|}{48.6} & \multicolumn{1}{c|}{\cite[p.~46]{SFCR}} & \multicolumn{1}{l|}{Bester Sch\"atzwert:  zuk\"unftige \"Uberschussbeteiligung} &  \\ 
\cline{1-4}
\end{tabular}%
\vspace{3pt}\caption{Allianz Lebensversicherungs AG: public data. Values are in billion Euros.}
\label{table:allianz}
\end{table}
}
\reviseS{
\begin{ass}\label{ass:C1}
Suppose the weighted depreciation factor~\eqref{e:D} is equal to the one with maturity $M = 15$, so that $D=\eta_{15}/(1+\eta_{15})$ 
with 
\begin{equation}
    \label{e:eta15}
    \eta_{15}
    :=
    P(0,15)
\cdot\Big(
1 - \frac{SD[B_{15}^{-1}]}{P(0,15)}
\cdot
\frac{SD[\sum_{t<15}ph_t^*]}{E_{\mathbb{Q}}[\sum_{t<15}ph_t^*]}
\Big)\cdot\frac{gph}{1-gph}.
\end{equation}
\end{ass}
Variations of this assumption are shown in Tables~\ref{table:lb10} and \ref{table:lb20}.

For concrete validation purposes, companies would have the data to explicitly calculate $D$ according to formula~\eqref{e:D}, since $w_x$ and $D_x$ are quantities which are known \emph{a priori}. 

\begin{ass}\label{ass:C1X}
Assumption~\ref{ass_B2} is made concrete by setting $gph = 0.80$. In Table~\ref{table:lb} we shall show results for $gph = 0.75$, $gph = 0.80$ and $gph = 0.85$.
\end{ass}
}

\begin{ass}
\label{ass_B6}
Surplus funds are bounded by the technical reserve, i.e., $SF_t \le \theta TR_t$ where $\theta >0$ is constant.
\end{ass}
\begin{ass}\label{ass_B7}
  The variance of $ph_t^*$ is not very high, that is, we  assume $SD[\sum ph_t^*] \cong 5\% \cdot E[\sum ph_t^*]$. 
\end{ass}
\begin{ass}\label{ass_B8}
\reviseS{I}nterest rate variance should also be reasonably bounded.  \reviseF{When estimated on monthly historical data 
from year-end 2014 to year 2017
as shown in Table~\ref{table:spot rates}, we find, for the  coefficient of variation, that  $SD[B_{15}^{-1}]/P(0,15) \cong 4\%$.}
\end{ass}


\reviseS{  
With the discount factor $P(0,15)\cong 84\%$ from Table~\ref{table:discount rates}, we insert these numbers in \eqref{e:eta} to obtain
\begin{align*}
    \eta 
    &\cong 84\%\cdot\Big(1 - 4\% \cdot 5\%\Big)\cdot \frac{0.8}{1-0.8}
    \cong 3.35
\end{align*}
and furthermore
\begin{equation}
    LB_1 :=
    \frac{\eta}{1+\eta}\Big(A_0 - GB\Big)
    \cong
    77\%\cdot\Big(192.3+43.2-154.1\Big)
    \cong
    62.68
\end{equation}
in billion Euros. To compare this number to  $FDB = 48.6$, we have to subtract two quantities from the lower bound $LB_1$:
\begin{enumerate}[\up (1)]
    \item 
    According to \cite[Art.~91]{L1} the surplus fund, $SF_0 = 10.4$, is not part of the Solvency~II value of liabilities \emph{if this article is authorized by national law}.  
    This is the case for Germany (see \cite{bafin15}), whence the surplus fund is to be subtracted from the future discretionary benefits which are calculated by the company (and this deductible is not part of the reported $FDB$). 
    \item
    The cross-financing term $F$ from Proposition~\ref{prop:fdbgen}.
\end{enumerate}
Therefore, the resulting lower bound is
\begin{equation}
    LB = LB_1 - SF_0 - F \cong 52.28 - F. 
\end{equation}
To estimate $F$, we have to attribute $A_0$ to individual contracts and say something about the cross-financing effects. This information is not publicly available. We 
therefore separate $A_0$ into buckets $A_0^{x(t)}$, where $A_0^{x(t)}$ belongs to those contracts $x(t)$ which mature at time $t$. Thus contracts are bundled according to their time of maturity. We have to make an assumption concerning the run-off of the portfolio:
\begin{ass}
\label{ass:C2}
The portfolio run-off is roughly geometric, so that the value of reserves  is reduced by a factor of $\frac{1}{2}$ every $10$ years until the end of the projection. 
This is formalized as the requirement
\begin{align*}
    A_0^{x(t)} 
    &:= 
    \Big(2^{-\frac{t-1}{10}} - 2^{-\frac{t}{10}}\Big)A_0
    \textup{ for } t=1,\dots,T-1 \\
    A_0^{x(T)}
    &:= 2^{-\frac{T-1}{10}} A_0,
\end{align*}
and we shall assume that $T=60$. 
\end{ass}
Notice that $A_0 = \sum_{x\in\mathcal{X}}A_0^x = \sum_{t=1}^{60}A_0^{x(t)}$ and 
the corresponding run-off is given by 
\begin{equation}
    \label{e:run-off}
    A_0 - \sum_{t=1}^{s}A_0^{x(t)} = A_0\cdot 2^{-s/10} 
\end{equation}
for $s<T$. 

Furthermore, we have to make an assumption concerning the policy holder cross-financing term $A^t_{M_t}$ from Proposition~\ref{prop:fdbgen} (where $M_t=t$). This term has to be a fraction of $A_0^{x(t)}$, but this fraction need not be the same for all contracts. For example, it is conceivable that contracts with a low technical interest rate will yield more cross-financing in comparison to contracts with a high technical interest rate. Moreover, the cross-financing effect need not be constant in time. 
\begin{ass}
\label{ass:C3}
The cross-financing factor is a decreasing function of time, but does not depend on other contract properties (such as technical interest rate): $A_t^{x(t)} = C(t)A_0^{x(t)}$ where  
\[
 C(t) = C_0\frac{T-t}{T},  
\] 
$T = 60$; and we will consider 
$C_0 = 1\%$, $C_0 = 3\%$ and $C_0 = 5\%$.
\end{ass}

With the above assumptions it is possible to  calculate the term $F$ from \eqref{e:FDBgen}. 
Indeed, we set 
\begin{equation}
    F 
    = \sum_{x\in\mathcal{X}}D_x A_t^x
    = C_0\sum_{t=1}^{60}\frac{\eta_{x(t)}}{1+\eta_{x(t)}}\frac{T-t}{T}A_0^{x(t)} 
\end{equation}
where 
\begin{equation}
    \eta_{x(t)}
    = P(0,t)\Big(1 - \frac{SD[B_t^{-1}]}{P(0,t)}\cdot 5\%\Big)\frac{gph}{1-gph}.
\end{equation}
Now, $P(0,t)$ is taken from Table~\ref{table:discount rates} and $SD[B_t^{-1}]$ is estimated from Table~\ref{table:spot rates}.
With $gph = 80\%$ and $C_0=3\%$ this yields $F\cong 4.1$, whence we obtain the lower bound
\begin{equation}\label{e:LB}
    LB = LB_1 - SF_0 - F \cong \euro{48.2}\,bn.
\end{equation}
This number should be compared with the value of $FDB = \euro{48.6}\,bn$ in Table~\ref{table:allianz}.
The assumptions leading to \eqref{e:LB} and, in particular, the choices $M=15$, $gph=80\%$ and $C_0=3\%$ are discussed in Section~\ref{sub:fdb-ass}.
See Section~\ref{sec:appendix} for further choices  of $M$, $gph$ and $C_0$. 
}

\subsection{Discussion of assumptions}\label{sub:fdb-ass}
\reviseS{In the following we shall provide justification for the above assumptions. Nevertheless, we stress that these remain unproved (in the precise mathematical sense) assumptions based on heuristic arguments and expert judgment. For some of the assumptions we can provide a sensitivity analysis in Section~\ref{sec:appendix}.}

Assumption~\ref{ass_B1} is one that is necessary for any asset liability model that could be employed for best estimate calculation. It is also necessary for our derivation of the lower bound formula. 
\reviseS{The \emph{net policy holder participation fraction} shall be denoted by 
\[
 nph.
\]}
Under Austrian law (\cite[\S~3]{GBVVU}) companies are required to share \emph{at least} $\reviseS{nph} = 85\%$ of their net profits  with policy holders. 
As the surplus fund  belongs to the liability side, this sharing mechanism does \emph{not} imply that $85\%$ of net profits are directly declared to specific policy holder accounts. Rather the profits are shared with the surplus fund and then may be used in the future, according to discretionary management rules, to increase policy holder profits. We also remark that the situation is very similar in Germany.
The Solvency II requirements for realistic modelling of future management actions are given in \cite[Art.~23]{L2}.

Assumption~\ref{ass_B2} means that the profit sharing rate and the \reviseF{corporate} tax rate are constant. 
In Austria this rate is $25\%$.  
\reviseS{If this rate is not constant, one would have to use a mean rate to derive the corresponding $gph$ from $nph$.}

Assumption~\ref{ass_B3} can be seen as a consequence of Solvency II's going concern hypothesis \cite[Art.~101]{L1}.  Indeed, if interest rates go up, one would assume that companies increase their policy holder profit declarations in order to remain a competitive participant in the market. 

\reviseS{Assumption~\ref{ass_B4} is only used to derive the preliminary result stated in Proposition~\ref{prop:fdb}, and is then removed in Section~\ref{sec:fdb-gen}.}

\reviseS{Assumption~\ref{ass_B5} remains unjustified as we do not know of any publicly available data to support it.} 

\reviseS{Assumption~\ref{ass_B4X} means that assets are shared equally among policy holders. For instance, this is the case for Germany (\cite[\S 3(1)]{MindZV}) 
and Austria (\cite[\S 3]{GBVVU}).}

\reviseS{Assumption~\ref{ass:FDBx}: Consider a contract $x$ with maturity $M_x$. This statement means that we assume the cross-financing between $x$ and other contracts, which occurs \emph{before} $M_x$, can be neglected \emph{on average}. Alternatively, one could separately consider the cross-financing before time $M_x$ and then assume that this cross-financing fully contributes to the $FDB$ in Equation~\eqref{e:FDBgen}, whence the lower bound remains unchanged.}

\reviseS{Assumption~\ref{ass:C1}: 
To calculate the weighted depreciation factor~\eqref{e:D}, one would need portfolio specific data. According to \cite[Fig.~A.~II.1]{LTGA} the distribution of liability duration for Germany is between 14 and 22 years. Since formula~\eqref{e:eta} includes a discounting term $P(0,t)$, we have chosen $M=15$ to reflect the fact that contributions from later times give rise to a lesser weight. The sensitivity with respect to  this assumption is shown in Tables~\ref{table:lb10} and \ref{table:lb20} for $M=10$ and $M=20$, respectively.}

\reviseS{Assumption~\ref{ass:C1X}: 
German legislature \cite[\S~4]{MindZV} distinguishes between two different values for the net policy holder participation $nph$. 
For \emph{most} sources of profit \cite[\S~4]{MindZV} dictates that $nph=90\%$. To arrive at the corresponding $gph$ one has to account for tax payments such that ``$nph$ times net profit'' equals ``$gph$ times gross profit''. 
This assumption thus amounts to using $gph = 80\%$ as an approximating average factor.\footnote{The corresponding Austrian value for $gph$ can be calculated to be  $80.75\%$.}  
The sensitivity on this assumption is shown in Tables~\ref{table:lb}, \ref{table:lb10} and \ref{table:lb20}. 
}

Assumption~\ref{ass_B6} 
means that \reviseF{gains as well as hidden reserves cannot be kept from the policy holder indefinitely.} 
This assumption is necessary in order to apply Proposition~\eqref{prop:gutlim} with  $E_{\mathbb{Q}}[B_T^{-1}MV_T] = 0$. 
\reviseS{For modelling purposes, one could apply the following  future management action: 
$SF_t \le \theta MV_t(TR)$ where the concrete value of $\theta<1$ is a management rule.}

Assumption~\ref{ass_B7} reflects a typical goal of management. In fact, the very purpose of the surplus fund is to keep policy holder participation stable over time. 
Further, we remark that \eqref{e:ph2} involves a product of two standard deviations. Thus even a higher value for $SD[\sum ph_t^*]$ would not have significant impact on \reviseS{the lower bound~\eqref{e:LB}.} 
Hence this assumption reflects the assumption that management would try to follow interest rate movements in profit sharing declaration but, at the same time, would try to avoid strong jumps in the declaration.

\reviseS{Assumption~\ref{ass_B8} is a consequence of the historical data in Tables~\ref{table:spot rates} and \ref{table:discount rates}. Nevertheless, we have listed this as an assumption since there are many different estimators and time series which one could use to find the coefficient of variation $SD[B_t^{-1}]/P(0,t)$. 
As with Assumption~\ref{ass_B7}, we remark that the sensitivity on this assumption is relatively low, since we are dealing with a product of two standard deviations.}

\reviseS{
Assumption~\ref{ass:C2} is a formalization of the idea that the portfolio run-off is, roughly, homogeneous in time: according to \cite[p.~20]{GB} the number of policies in the with-profit business went from approximately 10.5 million to approximately 10 million in the year 2017 (disregarding new business). This is a reduction of about  $5\%$. Now, time-homogeneity in this context should mean that the number of policies will be reduced by a factor of $(1-5\%)^t$ after $t$ years. Thus the run-off is geometric and we take this as a justification for the assumed portfolio reduction~\eqref{e:run-off}. 
However, in \eqref{e:run-off} we are concerned with the run-off of technical provisions instead of policy numbers. Therefore, this argument is only a partial justification for \eqref{e:run-off}.
We stress that companies would have the necessary information in this context, whence Assumption~\ref{ass:C2} would not be needed in real applications (or could be tested against real data). }

\reviseS{
Assumption~\ref{ass:C3} is related to Assumption~\ref{ass_B6}: if we did not assume that the surplus fund was bounded by the decreasing value of technical reserves, then the remainder term $A_t^{x(t)}$ could increase the surplus fund at time $t$ and never be converted to a cash-flow.
On the other hand, if we allow for an upper bound rule such as Assumption~\ref{ass_B6}, then $A_t^{x(t)}$ has to decrease with time. A good first approximation for $C_0$ could be $SF_0/A_0 = 4.4\%$. 
We have chosen $C_0=3\%$ to indicate that not all of the cross-financing $A_t^{x(t)}$ has to be kept within the surplus fund, but a part may also be directly declared to other policy holders (compare with Assumption~\ref{ass:FDBx}) at times $s\le t$. Sensitivity on this assumption is shown in Tables~\ref{table:lb}, \ref{table:lb10} and \ref{table:lb20}.
}

\section{Discussion of results}\label{sec:conclusions}

\subsection{The basic equation}
\reviseS{Best estimate calculation for with-profit life insurance products is a numerical task. This is due to the fact that there are no closed formula solutions to obtain a best estimate, whence Monte Carlo techniques are employed. This involves, in particular, setting up a full asset liability model such that all relevant cash-flows are generated. The general calculation process is discussed in \cite{Laurent16}. Given the degree of sophistication of these asset liability models, one should not expect a closed formula for best estimate calculation. 
}


The virtue of Equation~\eqref{e:gutlim2} is thus that it is a closed formula and has to hold for all asset liability models (that are market consistent and satisfy Assumptions \ref{ass_A1} and \ref{ass_A2}).

Of course, this does not mean that we can compute the best estimate solely based on Equation~\eqref{e:gutlim2}. The terms $VIF$ and $TAX$ are just as hard to compute as $BE$. The practical value of this equation rather is tied to the fact that it gives a very straightforward validation test for \emph{any} market consistent best estimate calculation model. 

This validation procedure can be viewed as a \emph{leakage test}. 
\reviseS{It is a necessary condition for two model properties:}
\begin{itemize}
\item 
All cash flows are accounted for -- nothing is ``lost'' by the numerical model.
\item
The (numerically generated) economic scenarios are free of arbitrage. 
\end{itemize}
It seems to us that this test is well-known to at least parts of the applied insurance mathematics community as a kind of ``folklore wisdom''. 
We remark here that we have found this formula independently. More importantly, we do not know of any literature which explicitly states this formula and, much less, of any references where the assumptions have been spelled out in a rigorous manner.

\subsection{The lower bound formula}
The best estimate can be \retwoS{expressed} as a sum of guaranteed cash flows $GB$ and future discretionary benefits $FDB$. Moreover, it is a Solvency II requirement to report $FDB$ separately as part of the best estimate (\cite[Template~S.12.01.01]{L3templ}).

The guaranteed benefits depend on second order assumptions concerning, e.g., mortality and surrender but are otherwise \emph{completely model-independent}. In particular, they do not depend on future management actions or economic scenarios. Indeed, these are fixed cash flows whence they can be valuated deterministically using only the initial risk-free (EIOPA-) interest rate curve.\footnote{See Remark~\ref{rem:GB}.}

On the other hand, the calculation of future discretionary benefits $FDB$ involves all the intricacies of best estimate valuation\reviseS{: asset liability model, management rules, economic scenarios, etc.\ -- see \cite{Kemp09,Laurent16,VKLP17}. Hence one should not expect a closed forumla solution for the $FDB$.}
However, the next best thing to a closed formula 
 is a lower bound, and this is what Proposition~\ref{prop:fdbgen} achieves. 

Clearly this lower bound cannot be used for reporting purposes, since the true $FDB$ could be considerably larger. We see three important applications for this estimate:
\begin{itemize}
\item 
It can be used by the company as an immediate test for their $FDB$ calculated by means of a numerical asset liabilty model.
\item
It can be used by the company as a target towards which to optimize the $FDB$ by \reviseS{an appropriate} choice of admissible management rules (reinvestment strategy, \reviseS{profit sharing, surplus fund management,} etc.)
\item
It can be used by the regulator as a simple on- or off-site plausibility check for reported $FDB$s.
\end{itemize}

All of this works only if the assumptions \reviseS{and generic management rules} discussed in Section~\ref{sub:fdb-ass} are either directly met or suitably amended. 

\reviseS{To show how Proposition~\ref{prop:fdbgen} could be used in practice we have, in Section~\ref{sub:fdb-num}, applied the lower bound formula to publicly available data of Allianz Lebensversicherung AG \cite{SFCR,GB}. In Equation~\eqref{e:LB} this yields a lower bound of $\euro{48.2}\,bn$, while the reported number, included in Table~\ref{table:allianz}, is $FDB = \euro{48.6}\,bn$. The assumptions leading to the result~\eqref{e:LB} are discussed in Section~\ref{sub:fdb-ass}. Finally, we have included results for the lower bound with respect to variations of some of the key assumptions in Tables~\ref{table:lb}, \ref{table:lb10} and \ref{table:lb20}.}

\reviseS{
\section{Appendix}\label{sec:appendix}
Table~\ref{table:spot rates} is used to calculate the standard deviation $SD[B_t^{-1}]$ of the observed discount factor $P(0,t)$ from Table~\ref{table:discount rates}. Compare with Remark~\ref{rem:SD}.

\begin{table}[H]
\centering
\begin{tabular}{cclll}
\cline{1-4}
\multicolumn{1}{|c|}{$M=15$} 
& \multicolumn{1}{c|}{$gph = 75\%$} & \multicolumn{1}{c|}{$gph = 80\%$} 
& \multicolumn{1}{l|}{$gph = 85\%$} &  \\ 
\cline{1-4}
\multicolumn{1}{|c|}{$C_0 = 1\%$} 
& \multicolumn{1}{c|}{46.9} & \multicolumn{1}{c|}{50.9} & \multicolumn{1}{c|}{55.7} &  \\ 
\cline{1-4}
\multicolumn{1}{|c|}{$C_0 = 3\%$} 
& \multicolumn{1}{c|}{44.3} 
& \multicolumn{1}{c|}{48.2} 
& \multicolumn{1}{c|}{52.8} &  \\ 
\cline{1-4}
\multicolumn{1}{|c|}{$C_0 = 5\%$} 
& \multicolumn{1}{c|}{41.8} & \multicolumn{1}{c|}{45.4} & 
\multicolumn{1}{c|}{49.8} &  \\
\cline{1-4}
\end{tabular}%
\vspace{3pt}\caption{Lower bound for the $FDB$. Values are in billion Euros. These results should be compared to the reported value of $FDB = 48.6$ from Table~\ref{table:allianz}.}
\label{table:lb}
\end{table}

\begin{table}[H]
\centering
\begin{tabular}{cclll}
\cline{1-4}
\multicolumn{1}{|c|}{$M=10$} 
& \multicolumn{1}{c|}{$gph = 75\%$} & \multicolumn{1}{c|}{$gph = 80\%$} 
& \multicolumn{1}{l|}{$gph = 85\%$} &  \\ 
\cline{1-4}
\multicolumn{1}{|c|}{$C_0 = 1\%$} 
& \multicolumn{1}{c|}{47.7} & \multicolumn{1}{c|}{52.5} & \multicolumn{1}{c|}{56.5} &  \\ 
\cline{1-4}
\multicolumn{1}{|c|}{$C_0 = 3\%$} 
& \multicolumn{1}{c|}{45.1} 
& \multicolumn{1}{c|}{49.8} 
& \multicolumn{1}{c|}{53.6} &  \\ 
\cline{1-4}
\multicolumn{1}{|c|}{$C_0 = 5\%$} 
& \multicolumn{1}{c|}{42.6} & \multicolumn{1}{c|}{47.0} & 
\multicolumn{1}{c|}{50.6} &  \\
\cline{1-4}
\end{tabular}%
\vspace{3pt}\caption{Lower bound for the $FDB$. Values are in billion Euros. These results should be compared to the reported value of $FDB = 48.6$ from Table~\ref{table:allianz}.}
\label{table:lb10}
\end{table}

\begin{table}[H]
\centering
\begin{tabular}{cclll}
\cline{1-4}
\multicolumn{1}{|c|}{$M=20$} 
& \multicolumn{1}{c|}{$gph = 75\%$} & \multicolumn{1}{c|}{$gph = 80\%$} 
& \multicolumn{1}{l|}{$gph = 85\%$} &  \\ 
\cline{1-4}
\multicolumn{1}{|c|}{$C_0 = 1\%$} 
& \multicolumn{1}{c|}{44.5} & \multicolumn{1}{c|}{49.3} & \multicolumn{1}{c|}{54.0} &  \\ 
\cline{1-4}
\multicolumn{1}{|c|}{$C_0 = 3\%$} 
& \multicolumn{1}{c|}{41.9} 
& \multicolumn{1}{c|}{46.6} 
& \multicolumn{1}{c|}{51.1} &  \\ 
\cline{1-4}
\multicolumn{1}{|c|}{$C_0 = 5\%$} 
& \multicolumn{1}{c|}{39.4} & \multicolumn{1}{c|}{43.8} & 
\multicolumn{1}{c|}{48.1} &  \\
\cline{1-4}
\end{tabular}%
\vspace{3pt}\caption{Lower bound for the $FDB$. Values are in billion Euros. These results should be compared to the reported value of $FDB = 48.6$ from Table~\ref{table:allianz}.}
\label{table:lb20}
\end{table}

\begin{table}[H]
\centering
\begin{tabular}{cclll}
\cline{1-4}
\multicolumn{1}{|c|}{F} 
& \multicolumn{1}{c|}{$C_0 = 1\%$} 
& \multicolumn{1}{c|}{$C_0 = 3\%$} 
& \multicolumn{1}{l|}{$C_0 = 5\%$} &  \\ 
\cline{1-4}
\multicolumn{1}{|c|}{$gph = 75\%$} 
& \multicolumn{1}{c|}{$1.3$} 
& \multicolumn{1}{c|}{$3.9$} 
& \multicolumn{1}{c|}{$6.4$} &  \\ 
\cline{1-4}
\multicolumn{1}{|c|}{$gph = 80\%$} 
& \multicolumn{1}{c|}{$1.4$} 
& \multicolumn{1}{c|}{$4.1$} 
& \multicolumn{1}{c|}{$6.9$} &  \\ 
\cline{1-4}
\multicolumn{1}{|c|}{$gph = 85\%$} 
& \multicolumn{1}{c|}{$1.5$} 
& \multicolumn{1}{c|}{$4.4$} 
& \multicolumn{1}{c|}{$7.4$} &  \\ 
\cline{1-4}
\end{tabular}%
\vspace{3pt}\caption{Values for the cross-financing term $F$. Values are in billion Euros.}
\label{table:F}
\end{table}

\begin{table}[H]
\centering
\resizebox{\textwidth}{!}{%
\begin{tabular}{cccccccccc}
\hline
\multicolumn{1}{|c|}{\textbf{12/2014}} & \multicolumn{1}{c|}{\textbf{01/2015}} & \multicolumn{1}{c|}{\textbf{02/2015}} & \multicolumn{1}{c|}{\textbf{03/2015}} & \multicolumn{1}{c|}{\textbf{04/2015}} & \multicolumn{1}{c|}{\textbf{05/2015}} & \multicolumn{1}{c|}{\textbf{06/2015}} & \multicolumn{1}{c|}{\textbf{07/2015}} & \multicolumn{1}{c|}{\textbf{08/2015}} & \multicolumn{1}{c|}{\textbf{09/2015}} \\ \hline
\multicolumn{1}{|c|}{1.08\%} & \multicolumn{1}{c|}{0.85\%} & \multicolumn{1}{c|}{0.88\%} & \multicolumn{1}{c|}{0.63\%} & \multicolumn{1}{c|}{0.78\%} & \multicolumn{1}{c|}{1.06\%} & \multicolumn{1}{c|}{1.43\%} & \multicolumn{1}{c|}{1.26\%} & \multicolumn{1}{c|}{1.40\%} & \multicolumn{1}{c|}{1.27\%} \\ \hline
\multicolumn{1}{l}{} & \multicolumn{1}{l}{} &  &  &  &  &  &  &  &  \\ \hline
\multicolumn{1}{|l|}{\textbf{10/2015}} & \multicolumn{1}{l|}{\textbf{11/2015}} & \multicolumn{1}{c|}{\textbf{12/2015}} & \multicolumn{1}{c|}{\textbf{01/2016}} & \multicolumn{1}{c|}{\textbf{02/2016}} & \multicolumn{1}{c|}{\textbf{03/2016}} & \multicolumn{1}{c|}{\textbf{04/2016}} & \multicolumn{1}{c|}{\textbf{05/2016}} & \multicolumn{1}{c|}{\textbf{06/2016}} & \multicolumn{1}{c|}{\textbf{07/2016}} \\ \hline
\multicolumn{1}{|c|}{1.22\%} & \multicolumn{1}{c|}{1.21\%} & \multicolumn{1}{c|}{1.34\%} & \multicolumn{1}{c|}{0.98\%} & \multicolumn{1}{c|}{0.75\%} & \multicolumn{1}{c|}{0.82\%} & \multicolumn{1}{c|}{0.98\%} & \multicolumn{1}{c|}{0.84\%} & \multicolumn{1}{c|}{0.66\%} & \multicolumn{1}{c|}{0.50\%} \\ \hline
\multicolumn{1}{l}{} & \multicolumn{1}{l}{} &  &  &  &  &  &  &  &  \\ \hline
\multicolumn{1}{|c|}{\textbf{08/2016}} & \multicolumn{1}{c|}{\textbf{09/2016}} & \multicolumn{1}{c|}{\textbf{10/2016}} & \multicolumn{1}{c|}{\textbf{11/2016}} & \multicolumn{1}{c|}{\textbf{12/2016}} & \multicolumn{1}{c|}{\textbf{01/2017}} & \multicolumn{1}{c|}{\textbf{02/2017}} & \multicolumn{1}{c|}{\textbf{03/2017}} & \multicolumn{1}{c|}{\textbf{04/2017}} & \multicolumn{1}{c|}{\textbf{05/2017}} \\ \hline
\multicolumn{1}{|c|}{0.52\%} & \multicolumn{1}{c|}{0.54\%} & \multicolumn{1}{c|}{0.78\%} & \multicolumn{1}{c|}{0.96\%} & \multicolumn{1}{c|}{0.96\%} & \multicolumn{1}{c|}{1.15\%} & \multicolumn{1}{c|}{0.99\%} & \multicolumn{1}{c|}{1.08\%} & \multicolumn{1}{c|}{1.10\%} & \multicolumn{1}{c|}{1.12\%} \\ \hline
\multicolumn{1}{l}{} & \multicolumn{1}{l}{} &  &  &  &  &  &  &  &  \\ \cline{1-7}
\multicolumn{1}{|c|}{\textbf{06/2017}} & \multicolumn{1}{c|}{\textbf{07/2017}} & \multicolumn{1}{c|}{\textbf{08/2017}} & \multicolumn{1}{c|}{\textbf{09/2017}} & \multicolumn{1}{c|}{\textbf{10/2017}} & \multicolumn{1}{c|}{\textbf{11/2017}} & \multicolumn{1}{c|}{\textbf{12/2017}} & \textbf{} & \textbf{} & \textbf{} \\ \cline{1-7}
\multicolumn{1}{|c|}{1.22\%} & \multicolumn{1}{c|}{1.29\%} & \multicolumn{1}{c|}{1.13\%} & \multicolumn{1}{c|}{1.26\%} & \multicolumn{1}{c|}{1.18\%} & \multicolumn{1}{c|}{1.17\%} & \multicolumn{1}{c|}{1.18\%} &  &  &  \\ \cline{1-7}
\end{tabular}%
}
\vspace{3pt}\caption{Euro base curve, 15-year spot rates.
\retwoS{Source: \cite{RFR2}}.}
\label{table:spot rates}
\end{table}

\begin{table}[H]
\centering
\begin{tabular}{|c|c|c|c|c|c|c|c|c|c|c|c|c|c|c|c|c|}
\cline{1-2} \cline{4-5} \cline{7-8} \cline{10-11} \cline{13-14} \cline{16-17}
\textbf{t} & \textbf{$P_{0,t}$} & \textbf{} & \textbf{t} & \textbf{$P_{0,t}$} & \textbf{} & \textbf{t} & \textbf{$P_{0,t}$} & \textbf{} & \textbf{t} & \textbf{$P_{0,t}$} & \textbf{} & \textbf{t} & \textbf{$P_{0,t}$} & \textbf{} & \textbf{t} & \textbf{$P_{0,t}$} \\ \cline{1-2} \cline{4-5} \cline{7-8} \cline{10-11} \cline{13-14} \cline{16-17} 
1 & 1.004 &  & 11 & 0.906 &  & 21 & 0.746 &  & 31 & 0.539 &  & 41 & 0.366 &  & 51 & 0.244 \\ \cline{1-2} \cline{4-5} \cline{7-8} \cline{10-11} \cline{13-14} \cline{16-17} 
2 & 1.005 &  & 12 & 0.889 &  & 22 & 0.726 &  & 32 & 0.519 &  & 42 & 0.351 &  & 52 & 0.234 \\ \cline{1-2} \cline{4-5} \cline{7-8} \cline{10-11} \cline{13-14} \cline{16-17} 
3 & 1.003 &  & 13 & 0.872 &  & 23 & 0.706 &  & 33 & 0.500 &  & 43 & 0.337 &  & 53 & 0.225 \\ \cline{1-2} \cline{4-5} \cline{7-8} \cline{10-11} \cline{13-14} \cline{16-17} 
4 & 0.997 &  & 14 & 0.855 &  & 24 & 0.685 &  & 34 & 0.481 &  & 44 & 0.324 &  & 54 & 0.216 \\ \cline{1-2} \cline{4-5} \cline{7-8} \cline{10-11} \cline{13-14} \cline{16-17} 
5 & 0.990 &  & 15 & 0.839 &  & 25 & 0.664 &  & 35 & 0.463 &  & 45 & 0.311 &  & 55 & 0.207 \\ \cline{1-2} \cline{4-5} \cline{7-8} \cline{10-11} \cline{13-14} \cline{16-17} 
6 & 0.979 &  & 16 & 0.824 &  & 26 & 0.643 &  & 36 & 0.446 &  & 46 & 0.299 &  & 56 & 0.199 \\ \cline{1-2} \cline{4-5} \cline{7-8} \cline{10-11} \cline{13-14} \cline{16-17} 
7 & 0.968 &  & 17 & 0.810 &  & 27 & 0.622 &  & 37 & 0.428 &  & 47 & 0.287 &  & 57 & 0.191 \\ \cline{1-2} \cline{4-5} \cline{7-8} \cline{10-11} \cline{13-14} \cline{16-17} 
8 & 0.954 &  & 18 & 0.795 &  & 28 & 0.601 &  & 38 & 0.412 &  & 48 & 0.276 &  & 58 & 0.183 \\ \cline{1-2} \cline{4-5} \cline{7-8} \cline{10-11} \cline{13-14} \cline{16-17} 
9 & 0.940 &  & 19 & 0.780 &  & 29 & 0.580 &  & 39 & 0.396 &  & 49 & 0.265 &  & 59 & 0.176 \\ \cline{1-2} \cline{4-5} \cline{7-8} \cline{10-11} \cline{13-14} \cline{16-17} 
10 & 0.923 &  & 20 & 0.764 &  & 30 & 0.559 &  & 40 & 0.381 &  & 50 & 0.254 &  & 60 & 0.169 \\ \cline{1-2} \cline{4-5} \cline{7-8} \cline{10-11} \cline{13-14} \cline{16-17} 
\end{tabular}%
\vspace{3pt}
\caption{Euro discount rates at 31.12.2017.  \retwoS{Source: \cite{RFR2}}.
}
\label{table:discount rates}
\end{table}
}

\end{document}